\newcommand{\eq}[1]{\begin{equation}
                     \begin{aligned} #1 \end{aligned}
                     \end{equation}}
\numberwithin{equation}{section}
\theoremstyle{definition}
\newtheorem{defn}{Definition}
\theoremstyle{plain}
\newtheorem{theo}{Theorem}
\theoremstyle{plain}
\newtheorem{lem}{Lemma}
\theoremstyle{remark}
\theoremstyle{remark}
\theoremstyle{remark}
\theoremstyle{plain}
\begin{document}

\normalem
\vspace*{-1.5cm}
\begin{flushright}
  {\small
  MPP-2018-146 \\
  }
\end{flushright}

\vspace{1.5cm}

\begin{center}
  {\LARGE
   On the Uniqueness of L$_\infty$ bootstrap:\\[0.3cm] 
   Quasi-isomorphisms are Seiberg-Witten Maps
}
\vspace{0.4cm}

\end{center}

\vspace{0.35cm}
\begin{center}
  Ralph Blumenhagen$^{1}$, Max Brinkmann$^{1}$,\\[0.1cm]  Vladislav Kupriyanov$^{1,2,3}$, Matthias Traube$^{1}$
\end{center}

\vspace{0.1cm}
\begin{center} 
\emph{$^{1}$ Max-Planck-Institut f\"ur Physik (Werner-Heisenberg-Institut), \\ 
   F\"ohringer Ring 6,  80805 M\"unchen, Germany } \\[0.1cm] 
\vspace{0.25cm} 
\emph{$^{2}$ CMCC-Universidade Federal do ABC, Santo Andr\'e, SP, 
Brazil}\\[0.1cm] 
\vspace{0.25cm} 
\emph{$^{3}$ Tomsk State University, Tomsk, Russia}\\

\vspace{0.0cm}

 \vspace{0.3cm} 
\end{center}

\vspace{1cm}

\begin{abstract}
\noindent
In the context of the recently proposed L$_\infty$ bootstrap approach, the
question arises whether the so constructed gauge theories are unique
solutions of the L$_\infty$ relations. Physically it is expected that 
two gauge theories should be considered equivalent if they are related
by a field redefinition described  by a Seiberg-Witten map. To clarify the 
consequences in the  L$_\infty$ framework, 
it is proven that Seiberg-Witten maps between physically equivalent
gauge theories  correspond to  quasi-isomorphisms of the underlying L$_\infty$
algebras.  The proof suggests  an extension of the definition of a Seiberg-Witten
map to the closure conditions of two gauge transformations 
and the dynamical equations of motion. 
\end{abstract}

\clearpage


\section{Introduction}

The connection between gauge theories and L$_\infty$ algebras goes
back to Zwiebach's formulation of Closed String field theory (CSFT)
\cite{Zwiebach:1992ie}. The idea was to consistently decompose the
moduli space of compact Riemann surfaces without boundaries into closed
string vertices and propagators. There each vertex corresponds to
a function taking $n$ fields as input and giving back one output
field. Using these functions, an action for bosonic CSFT was written
down. In order to quantize the theory the action should satisfy the
quantum BV-master action which, roughly speaking,  is the case if the
string $n$-products of the action define a loop L$_\infty$ algebra
\cite{Markl:1997bj}. 

When only looking at tree level diagrams the
action should satisfy the classical master equation which yields an
L$_\infty$ algebra for the $n$-products. Using the geometric language of
formal supermanifolds, in \cite{Alexandrov:1995kv} the connection
between the classical master action of the BV-formalism and L$_\infty$
algebras was further analyzed, leading to the physics folklore that
any consistent classical gauge theory should define an L$_\infty$
algebra. This picture was further elucidated in the more recent work
by Hohm and Zwiebach \cite{Hohm:2017pnh}, where it was explicitly
shown how  both the gauge transformation rules and the equations of 
motion of Chern-Simons and Yang-Mills theories are encoded in 
underlying  L$_\infty$ algebras.
New connections between
L$_\infty$ algebras and extended symmetry algebras, called  W-algebras, of
two-dimensional  conformal field theories were established
\cite{Blumenhagen:2017ulg,Blumenhagen:2017ogh}. 

Besides the
$n$-product defining a vertex in the action, to actually write down an action one also needs  an
appropriate inner product. 
When trying to construct gauge theories in situations where the notion
of an inner product and hence an action is a priori not clear it is
still a good guiding principle to demand that the constructed theory
should define an L$_\infty$ algebra. This point of view was applied in
a recent paper \cite{Blumenhagen:2018kwq} where the authors propose an
L$_\infty$ bootstrap program for the construction of non-commutative
gauge theories with non-constant non-commutativity parameter. 
Starting with a free theory and a fixed gauge group
the bootstrap idea proceeds by first making an ansatz for the lowest order
products and  then by solving the L$_\infty$ relations order by
order. This in turn defines the higher order L$_\infty$ products.
By this procedure one can derive both  
interaction terms in the resulting equations of motion and higher 
products describing the gauge transformations of the fields. 

A natural
question is if this procedure leads to a unique solution and how
physically equivalent solutions are related. From one side, two L$_\infty$ algebras are known to be equivalent if they are quasi-isomorphic~\cite{Lada:1994mn}. From the other, following
\cite{Seiberg:1999vs},  gauge theories describing the same physics
should be related via a Seiberg-Witten map, which ensures that gauge
orbits of the respective theories get mapped onto  each other. Seiberg-Witten maps 
have been studied in terms of the antibracket formalism for gauge theories (see \cite{Barnich:2003wq} and references therein), which is related to L$_\infty$ algebras. In this note we show that a Seiberg-Witten map is  the same as having a specific subset of quasi-isomorphism (QISO) relations of the L$_\infty$ algebras underlying the two
gauge theories.

We first motivate this connection with a simple
example from an abelian Chern-Simons theory. Then we  formulate and prove
a general theorem stating clearly the equivalence of the physical notion
of a Seiberg-Witten map and the mathematical notion of an  L$_\infty$
QISO. 
This means in particular that
different solutions to the bootstrap program yield equivalent physics
if they are related via an L$_\infty$ quasi-isomorphism. Let us
emphasize that this does not show uniqueness of the
bootstrap\footnote{In other words,  L$_\infty$ QISOs define equivalence classes of L$_\infty$ algebras. To illustrate this we remind the reader of a similar situation with the definition of the star product. Two star products $\star$ and $\star'$ are equivalent if they are related by a ``gauge'' transformation
$
f\star' g= \mathcal{D}^{-1}\big( \mathcal{D}f\star \mathcal{D}g
\big)$, where $\mathcal{D}$ is a QISO \cite{Kontsevich:1997vb}. In
general, $\star$ and $\star'$ may have very different expressions and
properties, e.g., $\star$ can be closed with respect to the integral
$\int f\star g=\int f\cdot g$, while $\star'$ is not. However some
features are conserved, if $\star$ is associative and represents the
quantization of the Poisson bracket $\{f,g\}$, then $\star'$ is also
associative and represents a different quantization of the same
bracket. Thus star products related by QISOs represent different
quantization prescriptions of the same classical system. This is
similar to our statement that QISOs relate physically equivalent gauge
theories via SW maps.}.  

In \cite{Barnich:1993vg} (see \cite{Barnich:2000zw} for a complete
review) the question of consistent interactions for classical
Yang-Mills gauge theory was addressed using local BRST-cohomology. This corresponds to an L$_\infty$ bootstrap starting with
a free Yang-Mills theory. The results there depend crucially on the 
concrete  gauge theory under consideration, hence it seems unlikely
that  there exist a general answer to the question of uniqueness. 
Instead, in this paper we take a more moderate step and investigate
solely the question  when  two a priori different solutions to the
bootstrap describe the  same physics.

The paper is organized as follows. In section 2 we briefly recall the
definition of L$_\infty$ quasi-isomorphisms and Seiberg-Witten
maps. In section 3, we recall the bootstrap approach and present a
simple example   showing
how possible redundancies can appear when solving the L$_\infty$
equations. We point out that these are related to Seiberg-Witten maps 
 and to the existence of  quasi-isomorphisms. Building on these examples we give a general
proof of the equivalence of the two structures in section
\ref{sec_main}. We realize that from the QISO structure, one can learn
some new aspects of the Seiberg-Witten map, namely how the closure
conditions of the two related gauge theories are mapped and how 
the equations of motion are related.

\section{L$_\infty$-QISO and Seiberg-Witten maps}

In this section we set the stage and present the relevant definitions
and properties of  L$_\infty$ algebras, their quasi-isomorphisms and Seiberg-Witten
maps. We will start with some aspects of  L$_\infty$ algebras.

\subsection{Basics of L$_\infty$ algebras}
\label{sec_basicL}

For convenience, as signs are simpler, we will work
with the following definition \cite{Kajiura:2004xu} of an L$_\infty$ algebra in the so-called
$b$-picture \cite{Hohm:2017pnh}:
\begin{defn}
\label{defiLinfty}
An \textbf{L$_\infty$ algebra in the b-picture} consists of a graded vector space $V=\underset{i\in \mathbb{Z}}{\bigoplus}\, V_i$ equipped with graded symmetric multilinear maps $b_n:V^{\otimes\, n}\rightarrow V$ of constant degree $|b_n|=-1$ satisfying 
\eq{
\label{jacobigen}
\underset{\sigma\in \, {\rm Unsh}(k+l=n)}{\sum} \, \epsilon(\sigma;x)\, b_{1+l}\Big(b_k(x_{\sigma(1)},\dots, x_{\sigma(k)}),x_{\sigma(k+1)},\dots,x_{\sigma(n)}\Big)=0
}
with $\epsilon(\sigma;x)$ denoting the Koszul sign defined via
\eq{
x_{\sigma(1)}\wedge\dots\wedge x_{\sigma(n)}=\epsilon(\sigma;x)\, x_1\wedge\dots\wedge x_n\; .
}
Here we used $x_i\wedge x_j=(-1)^{x_i x_j}\, x_j\wedge x_i$ where an $x_i$ in the
exponent  stands for the degree of the corresponding
element. Moreover, the sum in \eqref{jacobigen} runs over the
unshuffled permutations that satisfy
\eq{
           \sigma(1)<\ldots <\sigma(k)\,,\qquad \sigma(k+1)<\ldots
           <\sigma(n) \,.
}
\end{defn}

\vspace{0.3cm}
In contrast to the definition in \cite{Kajiura:2004xu} we used a
homological grading on the graded vector space which merely results in
having the maps being of degree $-1$. The first few defining equations
explicitly  read
\begin{itemize}
\item[$n$=1:]\label{firstequ}
\eq{
0=b_1\left(b_1(x)\right)
}
\item[$n$=2:]
\eq{\label{secondequ}
0=b_1\left(b_2(x_1,x_2)\right)+b_2\left(b_1(x_1),x_2\right)+(-1)^{x_1}\, b_2\left(x_1,b_1(x_2)\right)
}
\item[$n$=3:]\label{thirdequ}
\eq{
0&=b_1\left(b_3(x_1,x_2,x_3)\right)+b_3\left(b_1(x_1),x_2,x_3\right)+(-1)^{x_1}b_3\left(x_1,b_1(x_2),x_3\right)\\
&\phantom{}+(-1)^{x_1+x_2}\,
b_3\left(x_1,x_2,b_1(x_3)\right)+b_2\left(b_2(x_1,x_2),x_3\right)\\
&\phantom{}+(-1)^{x_3(x_2+x_1)}\, b_2\left(b_2(x_3,x_1),x_2\right)+(-1)^{x_1(x_2+x_3)}\, b_2\left(b_2(x_2,x_3),x_1\right).
}
\end{itemize}
Like in the usual definition ($\ell$-picture), the first
map is a differential. The second equation is like a Leibniz rule
between the differential and the two-bracket, but now with slightly
unusual signs. The same applies for the third equation, which still
means that the two-bracket satisfies the Jacobi identity up to
homotopy. The crucial difference to the $\ell$-picture is that the $b$-maps are graded
symmetric and of constant degree $-1$. 
Having this definition at hand we can now follow \cite{Kajiura:2004xu} 
and provide the notion of an L$_\infty$ morphism\footnote{In \cite{Lada:1994mn} it was shown that the $\ell$-picture of an
 L$_\infty$ algebra is equivalent to having a nilpotent coderivation $Q$ of degree
$-1$ on the coalgebra over the suspension of the graded vector
space. A morphism between two L$_\infty$ algebras $(C(V),Q_V)$ and
$(C(W),Q_W)$ is then a cohomomorphism $F:C(V)\rightarrow C(W)$ of the
coalgebras satisfying $Q_W\circ F=F\circ Q_V$. Although this gives a
closed expression of an L$_\infty$ morphism this expression is not
very useful for actual calculations.}:

\begin{defn}
Let $(V, \left\lbrace b_i \right\rbrace)$ and $(W,\big\lbrace
  \tilde{b}_j \big\rbrace)$ be L$_\infty$ algebras. 
\vspace{-0.1cm}
\begin{itemize}
\item[1)] An \textbf{L$_\infty$ morphism} $F:(V, \left\lbrace b_i
  \right\rbrace)\rightarrow (W,\big\lbrace \tilde{b}_i \big\rbrace)$
  consists of multilinear, graded symmetric maps $\left\lbrace F_n
  \right\rbrace: \bigotimes^n V\rightarrow W$ of degree $|F_n|=0$ 
such that
\eq{\label{defiquasi}
&\underset{\sigma \in\, {\rm Unsh}(k+l=n)}{\sum} \, \epsilon(\sigma; x)\, F_{1+l}\Big(b_k(x_{\sigma(1)},\dots, x_{\sigma(k)}),x_{\sigma(k+1)},\dots, x_{\sigma(n)}\Big)\\
=&\underset{\sigma \in \, {\rm Unsh}(k_1+\dots+k_j =n)}{\sum}\; \frac{\epsilon(\sigma;x)}{j!}\, \tilde{b}_j(F_{k_1}\otimes\dots\otimes F_{k_j})(x_{\sigma(K)})\; .
}
In the second line the index of the entry is a multindex $K=(k_1,\dots,
k_j)$ of length $n$. It keeps track of the partition of the $n$ entries
$(x_1,\dots, x_n)$ inserted into the $j$ maps $F_{k_i}$. 
The unshuffles $\sigma(K)$ consist of all permutations keeping the entries in the $j$-partitions ordered.  

\item[2)]  An \textbf{L$_\infty$ quasi-isomorphism} is an L$_\infty$
  morphism $F$, whose lowest map $F_1:~V\rightarrow W$ induces an
  isomorphism on the homology of the chain complexes underlying the
  L$_\infty$ algebras.
\end{itemize}

\end{defn}

\vspace{0.3cm}
\noindent
Concretely, at lowest order the definition of an L$_\infty$ morphism gives the
following equations 
\begin{itemize}
\item[$n$=1:] 
\eq{\label{QISO1}
F_1(b_1(x))=\tilde{b}_1(F_1(x))
}
\item[$n$=2:]
\eq{\label{QISO2GEN}
&F_1(b_2(x_1,x_2))+F_2(b_1(x_1),x_2)+(-1)^{x_1} F_2(x_1,b_1(x_2))\\
=&\,
\tilde{b}_1(F_2(x_1,x_2))+{\textstyle\frac{1}{2}}\tilde{b}_2(F_1(x_1),F_1(x_2))+{\textstyle
  \frac{1}{2}}\tilde{b}_2(F_1(x_1),F_1(x_2))\,.
}
\end{itemize}

\noindent
Since later we will choose $F_1={\rm id}$, (i.e. we are considering
automorphisms of L$_\infty$ algebras on a given chain complex), the
second equation can be rewritten as
\eq{\label{QISO2}
\tilde{b}_2(x_1,x_2)-&b_2(x_1,x_2)=\\
&F_2(b_1(x_1),x_2)+(-1)^{x_1}F_2(x_1,b_1(x_2))-\tilde b_1(F_2(x_1,x_2))\,.
}
Thus for two L$_\infty$ algebras $(V, \left\lbrace  b_i \right\rbrace )$ and 
$(V,\big\lbrace \tilde{b}_i\big\rbrace )$ with $b_1=\tilde{b}_1$ to be
quasi-isomorphic the product $\tilde{b}_2$ must be decomposable in the
form \eqref{QISO2}. It is not too hard to see, that this procedure
also works in higher degree equations. The right hand side of
\eqref{defiquasi} always contains terms with $j=n$, i.e. a summand  of the form
\eq{ 
\underset{\sigma \in S_n}{\sum} \frac{\epsilon(\sigma;x)}{n!}\, \tilde{b}_n(F_1\otimes\dots\otimes F_1)(x_{\sigma(1)},\dots, x_{\sigma(n)})\; .
}
The sum runs over all permutations of $\left\lbrace 1,\dots, n
\right\rbrace$. As in the step from \eqref{QISO2GEN} to \eqref{QISO2}
we use $F_1={\rm id}$. Upon interchanging the entries, all terms are equal
in this case and a single term $\tilde{b}_n(x_1,\dots,x_n)$ is left.
This shows  that the defining relation can always be solved for the
highest appearing product $\tilde{b}_n$. Bringing all other terms to
the left produces a necessary and sufficient condition on the products
$\tilde{b}_n$ for being  related to the $b$'s via a quasi-isomorphism
\eq{\label{PRNEWQISO}
&\tilde{b}_n(x_1,\dots, x_n)-b_n(x_1,\dots,x_n)\\
=&\underset{\sigma \in\, \underset{k<n}{{\rm Unsh}(k+l=n)}}{\sum} \,
\epsilon(\sigma; x)\, F_{1+l}\Big(b_k(x_{\sigma(1)},\dots,
x_{\sigma(k)}),x_{\sigma(k+1)},\dots,
x_{\sigma(n)}\Big)\bigg|_{F_1={\rm id}}\\
&-\underset{\sigma \in \, {\rm Unsh}(k_1+\dots +k_j =n), j<n}{\sum}\;
\frac{\epsilon(\sigma;x)}{j!}\, \tilde{b}_j(F_{k_1}\otimes\dots\otimes
F_{k_j})(x_{\sigma(K)})\bigg|_{F_1={\rm id}}\; .
}
Since in the bootstrap approach one solves the L$_\infty$ equations in
an iterative way starting with the lowest products, the above relation
can be applied for  relating  two possible solutions $b$ and $\tilde{b}$.
We will provide a simple example for this procedure in section \ref{sec_example}.

\subsection{Seiberg-Witten maps}

Next we recall the  notion of a Seiberg-Witten (SW) map between gauge
theories. In their seminal paper \cite{Seiberg:1999vs} Seiberg and
Witten analyzed the behavior of open strings in backgrounds with
non-zero 
but constant Kalb-Ramond $B$-field. It turns out that the $B$-dependence can be
completely captured by making space-time non-commutative, i.e. by
introducing the Moyal-Weyl star product between functions with
$\theta\sim B^{-1}$. It is then further argued that in the limit
of large $B$-field one arrives at a description in terms of non-commutative
Yang-Mills (YM) theory with the  Moyal-Weyl star product. As
this corresponds to a specific choice of regularization of the world
sheet theory, Seiberg and Witten argued that there has to be a one to
one correspondence between non-commutative and  ordinary Yang-Mills theory. 

Recall that non-commutative Yang Mills theory on the Moyal-Weyl plane
with a star product 
\eq{
(f\star g)(x):= e^{\frac{i}{2}\theta^{ij}\frac{\partial}{\partial y^i}\frac{\partial}{\partial z_j}}\,  f(y)g(z)\, \vert_{y=z=x}
}
formally looks like usual YM theory where all point-wise
products are replaced by star products. This means that for a
non-commutative gauge field $\hat{A}$ the infinitesimal gauge
variation takes the form
\eq{\label{noncommutrafoa}
\hat{\delta}_{\hat{\lambda}}\hat{A}_j=\partial_j \hat{\lambda}+i\,
\hat{\lambda}\star \hat{A}_j- i\,\hat{ A}_j\star\hat{\lambda}
}
while the field strength is defined as
\eq{
\label{noncommutrafob}
\hat{F}_{jk}=\partial_j\hat{A}_k-\partial_{k}\hat{A}_j-i\, \hat{A}_j\star\hat{A}_k+i\, \hat{A}_k\star\hat{A}_j\,.
}
As pointed out in  \cite{Seiberg:1999vs}, in order to relate non-commutative Yang-Mills theory to ordinary Yang-Mills
it suffices that the gauge orbits of the respective theories
are  mapped onto each other. This ensures that the degrees of
freedom on both sides are the same\footnote{Note that this does not imply that
the gauge groups are equivalent, as can  be inferred
by looking at a  non-commutative $U(1)$ YM theory,  which is
non-abelian.}. The relation between the non-commutative and
ordinary YM theory turned out to be of the following type.

\begin{defn}
\label{defSWmap}
Two gauge theories with data $(\lambda,A)$ and
$(\hat\lambda,\hat A)$  are related via a  Seiberg-Witten map, if
there exist two maps 
\eq{
   \hat{\lambda}=\hat{\lambda}(\lambda,A)\,,\qquad
    \hat{A}=\hat{A}(A)
}
so that (at linear order in $\lambda$) gauge orbits are mapped onto gauge orbits
\eq{\label{SWMAP1}
\hat{A}(A+\delta_\lambda A)=
\hat{A}(A)+\hat{\delta}_{\hat{\lambda}(\lambda, A)}\hat{A}(A)\; .
}
\end{defn}

\vspace{0.3cm}
\noindent
For the concrete case of non-commutative YM theory, this map has been worked
out order by order in $\theta^{ij}$~\cite{Seiberg:1999vs}. As we
will see later when comparing L$_\infty$ QISO to SW maps, 
this definition can be extended by further requirements.
To see how, let us consider the closure of two  gauge transformations
\eq{
           \left[ \delta_{\lambda_1}, \delta_{\lambda_2} \right]  A =\delta_{[\lambda_1,\lambda_2]} A\,,\qquad 
            [\delta_{\hat\lambda_1},\delta_{\hat\lambda_2}] \hat A =\delta_{[\hat\lambda_1,\hat\lambda_2]_\star} \hat A
}
on both sides of the  the original SW duality. 
The naive guess for the mapping between the two closure conditions would be 
\eq{
\hat{A}\left(A+\delta_{ [\lambda_1,\lambda_2] }A\right)=\hat{A}(A)+\hat{\delta}_{[\hat{\lambda}_1,\hat{\lambda}_2]_{\star}}\hat{A}(A)\,.
}
However,  by inspection of the $U(1)$ case this cannot  be
true. The left-hand side  of the equation is 
$\hat{A}(A)$ as $U(1)$ is
abelian, which is not true for the non-commutative $U(1)$ on the right
hand side. Hence the
second term on the right side is non-vanishing and the equation cannot
hold. However, there is something we can say about the closure
relation from \eqref{SWMAP1}
\eq{
\label{SWCLOSURE1}
\hat{A}\left(A+\delta_{ [\lambda_1,\lambda_2] }A\right)=\hat{A}(A)+\hat{\delta}_{\hat{\lambda}([\lambda_1,\lambda_2],A)}\hat{A}(A)\; .
}
Thus the question is what $\hat{\lambda}([\lambda_1,\lambda_2],A)$
really is.  We can use the explicit form of the original
SW map up to first order in the non-commutativity parameter
$\theta^{ij}$,(c.f. \cite{Seiberg:1999vs} (3.5)) to derive an equation to
first order. An elementary but tedious computation shows that the
following relation holds 
\eq{
\label{commuexp}
\hat{\lambda}([\lambda_1,\lambda_2],A)\Big\vert_{\mathcal{O}(\theta)}=\left. \big[\hat{\lambda}_1,\hat{\lambda}_2\big]_{\star}\right\vert_{{\mathcal{O}(\theta)}}+\left. \hat{\lambda}(\lambda_1,\delta_{\lambda_2}A)-\hat{\lambda}(\lambda_2,\delta_{\lambda_1}A)\right\vert_{\mathcal{O}(\theta)}\; .
}
Holding up to first order in $\theta^{ij}$, it is a reasonable extension to
require \eqref{commuexp} to hold to all orders. 
Therefore, we conjecture that the gauge closures should map as 
\eq{\label{SWCLOSURE2}
\hat{A}\left(A+\delta_{ [\lambda_1,\lambda_2]
  }A\right)=\hat{A}(A)&+\hat{\delta}_{[\hat{\lambda}_1,\hat{\lambda}_2]_{\star}}\hat{A}(A)\\[0.1cm]
&+\hat{\delta}_{\hat{\lambda}(\lambda_1,\delta_{\lambda_2}A)}\hat{A}(A)-\hat{\delta}_{\hat{\lambda}(\lambda_2,\delta_{\lambda_1}A)}\hat{A}(A)\; .
}
In the main section \ref{sec_main}, by exploiting the intriguing relation of the
SW map to  an L$_\infty$  QISO we collect further evidence that this
is the right formula. For gauge transformations that  close 
only on-shell there will be an extra term in \eqref{SWCLOSURE2}.
Moreover, we will find how SW maps should be 
extended to the equations of motion of the two equivalent theories.

\section{Redundancies in the L$_\infty$ bootstrap}

We begin with recalling how a classical gauge theory with irreducible
gauge freedom can be described in terms of an L$_\infty$ algebra. By
irreducible gauge freedom we mean that the gauge parameters of the
field theory do not have gauge redundancies themselves. This very much
builds on the dictionary established in \cite{Hohm:2017pnh}. 

\subsection{Basics of the L$_\infty$ bootstrap approach}

In order
to define an L$_\infty$ algebra in the sense of definition
\eqref{defiLinfty} we need a graded vector space. We take
$X=X_{1}\oplus X_0\oplus X_{-1}$ and all others trivial. The
assignment is as follows. The vector space $X_1$ contains the gauge
parameters, $X_0$ is the space of fields and $X_{-1}$ contains the
equations of motion of the gauge theory. A standard gauge
transformation is then of the form
\eq{\label{gaugetrafodefi}
\delta_\lambda A=b_1(\lambda)+b_2(\lambda,A)+\frac{1}{2}b_3(\lambda,A,A)+\dots =\underset{n=0}{\overset{\infty}{\sum}}\, \frac{1}{n!}b_{n+1}(\lambda,A^n)\, . 
}
The equations of motion can be written as 
\eq{
\mathcal{F}=b_1(A)+\frac{1}{2}b_2(A,A)+\frac{1}{3!}b_3(A,A,A)+\dots=\underset{n=1}{\overset{\infty}{\sum}} \frac{1}{n!}\, b_n(A^n)\; .
}
Then, the equation of motion transforms under gauge transformations as follows
\eq{
\delta_\lambda
\mathcal{F}=b_2(\lambda,\mathcal{F})+b_3(\lambda,\mathcal{F},A)
+\dots =\sum_{n=0}^\infty\, \frac{1}{n!}\, b_{n+2}(\lambda,\mathcal{F},A^n)\; .
}
Using the L$_\infty$ equations one can show that the gauge commutator yields \cite{Hohm:2017pnh,Fulp:2002kk}
\eq{\label{closure}
\left[ \delta_{\lambda_1}, \delta_{\lambda_2} \right] A\sim
\delta_{C(\lambda_1,\lambda_2,A) }\, A+ \delta^T_{C(\lambda_1,\lambda_2,{\cal F},A)}\,A
}
with
\eq{
  C(\lambda_1,\lambda_2,A)=
  b_2(\lambda_1,\lambda_2)+b_3(\lambda_1,\lambda_2,A)+\ldots=
   \sum_{n=0}^\infty\, \frac{1}{n!}\, b_{n+2}(\lambda_1,\lambda_2,A^n)
}
and where the second term on the right hand side of \eqref{closure}
vanishes on-shell. It can be expanded as
\eq{
\label{defieomtrafo}
  C(\lambda_1,\lambda_2,{\cal F},A)=
  b_3(\lambda_1,\lambda_2,{\cal F})
+\ldots=
   \sum_{n=0}^\infty\, \frac{1}{n!}\, b_{n+3}(\lambda_1,\lambda_2,{\cal F},A^n)\,.
}
We come back to this term in section \ref{sec_main}, but for now we
drop it and assume that the gauge structure defines a closed
algebra. Comparing \eqref{gaugetrafodefi} to \eqref{noncommutrafoa},
for the non-commutative YM theory one can read off
\eq{
b_1(\hat\lambda)=\partial \hat\lambda, \qquad b_2(\hat\lambda,\hat
A)=i \, \hat{\lambda}\star \hat{A}_j- i\,\hat{ A}_j\star\hat{\lambda},
\qquad b_n(\hat\lambda,\hat A^n)=0,\, n\ge 2\; .
}

Instead of reading-off the L$_\infty$ maps from a given gauge
structure and equations of motion and checking the L$_\infty$
equations one could proceed in the other direction. This the 
L$_\infty$ bootstrap approach followed in \cite{Blumenhagen:2018kwq}. Let us explain
briefly how this works, for more details we refer to \cite{Blumenhagen:2018kwq}.

One initially  starts with a free theory, i.e. an assignment of $b_1(\lambda), b_1(A)$ and a
fixed gauge group represented through its algebra captured in
$b_2(\lambda_1,\lambda_2)$ and the other products left open a priori. Trying to solve
subsequently the L$_\infty$ equation will fix higher products and
define a consistent theory by the arguments above. Due to the grading of the elements a specific
L$_\infty$ equation has to be solved only on particular inputs. Note
that every L$_\infty$ relation in the $b$-picture will lower the degree
by two. Take for example  the equation with two inputs \eqref{secondequ}. It
is only non-trivial on insertions $\lambda_1,\lambda_2\in X_{1}$ of total degree
$2$ and $\lambda\in X_{1}, A\in X_{0}$ of total degree $1$. The first case yields
\eq{
0=b_1(b_2(\lambda_1,\lambda_2))+b_2(b_1(\lambda_1),\lambda_2)-b_2(\lambda_1,b_1(\lambda_2))\; .
}
The first term is fixed by our assignments, but we are free to chose
$b_2(\lambda,A)$ such that the above holds with $A=b_1(\lambda)$. Hence we
are defining a new product corresponding to the transformation law
of the field. In the same fashion the insertion of $\lambda,A$ will then fix
a product $b_2(A,B)$ with $A,B \in X_0$. 

The question we are concerned with in this paper is, if the L$_\infty$ relations
 uniquely fix the products or whether  there are other choices
$\tilde{b}_2(\lambda,A)$ and $\tilde{b}_2(A,B)$ solving them. Note that
this is in general a tough question and will very much depend on the
specific theory we are looking at. Hence we do not expect an answer to
this question solely from the L$_\infty$ perspective. In this paper we
rather want to address the situation when there are different
solutions to the equations, and discuss how to decide if they lead to
equivalent field theories. 

From what we already said, we expect
that two solutions related by a SW map should be considered physically
equivalent. It would  be nice if mathematically this corresponded to
the existence of a corresponding L$_\infty$ QISO. Before we formulate 
a general theorem relating the physical and the mathematical approach, let
us discuss a simple example.

\subsection{Redundancies in the abelian Chern-Simons theory}
\label{sec_example}

Let us consider  the L$_\infty$ algebra for a free 3D Chern-Simons
theory \cite{Hohm:2017pnh} with gauge group $U(1)$. The vector space
is $X=X_{1}\oplus X_0\oplus X_{-1}$ and  the only non-trivial maps are
\eq{
b_1(\lambda)_a &= \partial_a \lambda\\
b_1(A)_a &= {\epsilon_a}^{bc}\partial_b A_c \\
b_2(\lambda_1,\lambda_2) &= 0 \,.
}
Choosing all other maps to vanish satisfies the L$_\infty$ algebra so
the bootstrap approach is rather trivial.
The gauge transformations, gauge algebra and field equations are very simple
\eq{
\delta_\lambda A = \partial_a \lambda \;,\quad
[\delta_{\lambda_1},\delta_{\lambda_2}]=0 \;,\quad
\mathcal{F}={\epsilon_a}^{bc}\partial_b A_c \,.
}

Revisiting the defining equation (\ref{secondequ}) for inputs
$\lambda_1,\lambda_2\in X_1$, the condition on $b_2(\lambda,A)$ is not to vanish identically but rather
\eq{
0=b_2\left(b_1(\lambda_1),\lambda_2\right)- b_2\left(\lambda_1,b_1(\lambda_2)\right) \,.
}
This admits other solutions, for instance 
\eq{
\tilde{b}_2(\hat\lambda,\hat A)_a=-v^i(\hat A_i\partial_a \hat\lambda +
\hat A_a\partial_i \hat\lambda)
}
with some constant vector $v^i$. Comparing with (\ref{gaugetrafodefi})
this map plays a role in the gauge transformation of the fields $\hat A\in
X_0$. The L$_\infty$ structure forces us to introduce further maps to
satisfy all defining equations, essentially bootstrapping a theory
with deformed gauge transformations. Up to  order $\tilde{b}_3$, the maps are given by
\eq{ \label{u1cs-bmaps}
 b_1(\hat\lambda)_a &= \partial_a \hat\lambda \\
 b_1(\hat A)_a &= {\epsilon_a}^{bc}\partial_b \hat A_c \\
 \tilde{b}_2(\hat A,\hat\lambda)_a &= -v^i(\hat A_i\partial_a \hat\lambda +
 \hat A_a\partial_i \hat\lambda)\\
 \tilde{b}_2(\hat E,\hat\lambda)_a &= v^i \hat E_a \partial_i \hat\lambda \\
 \tilde{b}_2(\hat A,\hat B)_a &= v^i {\epsilon_a}^{bc}(\hat
 A_c\partial_b \hat B_i + \partial_b \hat A_i \hat B_c) \\
 \tilde{b}_3(\hat A,\hat B,\hat\lambda)_a &= -v^i v^j \hat A_i \hat B_j \partial_a \hat\lambda \\
 \tilde{b}_3(\hat A,\hat B,\hat C)_a &= 2 v^i v^j
 {\epsilon_a}^{bc}\big(\partial_b \hat A_i (\hat B_j \hat C_c+\hat B_c
 \hat C_j) +(\hat A\hat B\hat C\text{-cyclic})\big)\\
 &\vdots
}
The deformed gauge transformation can be read-off as 
\eq{
\hat{\delta}_{\hat\lambda} \hat A = \partial_a \hat\lambda - v^i \hat A_i\partial_a
\hat\lambda -v^i \hat A_a\partial_i \hat\lambda -\frac{1}{2}v^i v^j \hat
A_i \hat A_j \partial_a \lambda + ...
}
and the deformed field equation are 
\eq{
\hat{\mathcal{F}}={\epsilon_a}^{bc}\left(\partial_b  \hat
  A_c+v^i\partial_b \hat A_i \hat A_c + 2 v^iv^j \partial_b \hat A_i
  \hat A_j \hat A_c +... \right)
}
while the closure remains trivial. As the ellipsis indicate the
equations are only exact up to terms coming from maps higher than
those given above. In the following we will suppress the ellipsis for
better readability. Now let us analyze whether these two bootstrapped
solutions are related via a SW map and an L$_\infty$ QISO, respectively.

\subsubsection*{Seiberg-Witten map}

Inspection reveals that indeed there exists a Seiberg-Witten map
relating the two solutions. The  field redefinitions are 
\eq{
\hat{A}_a(A) &= A_a - v^iA_iA_a + \frac{1}{2}v^iv^jA_iA_jA_a
\\
\hat{\lambda}(\lambda,A)&= \lambda 
\;.
}
The Seiberg-Witten condition (\ref{SWMAP1}) can be easily verified by
direct computation. Since the gauge algebra is trivial in both cases
and the gauge parameters map directly to each other, the deformed
gauge algebra maps to the original gauge algebra. One can also show
that the deformed field equation map to a product of original one and
a function of the gauge field,
\eq{
\label{SWfielde}
\hat{\mathcal{F}}(\mathcal{F},A)=\mathcal{F}(A)\left(1-v^iA_i +\frac{1}{2}v^iv^jA_iA_j  \right) \,.
}

\subsubsection*{L$_\infty$ QISO}

Continuing the discussion from the end of section \ref{sec_basicL},
let us now investigate whether we can also identify an L$_\infty$ QISO.
First recall that we have 
$\tilde{b}_1=b_1$ and $b_{n\geq 2}=0$. Then the first
quasi-isomorphism relation \eqref{QISO1}  is the fact that the free theories are the
same, $F_1(b_1(x))=\tilde{b}_1(F_1(x))=b_1(x)$ with $F_1={\rm id}$ and
for $x\in\{\lambda, A\}$.
The quasi-isomorphism relation \eqref{QISO2} for $\tilde{b}_2$ is
\eq{
 \tilde{b}_2(x_1,x_2)=F_2(b_1(x_1),x_2)+(-1)^{x_1}F_2(x_1,b_1(x_2))-b_1(F_2(x_1,x_2))\,.
}
Evaluating this on the possible combinations of entries $(\lambda_1,\lambda_2)$, $(\lambda,A)$, $(\lambda,E)$, and $(A,B)$ allows us to read off the next quasi-isomorphism maps.
\eq{\label{1}
F_2(E,A)&=-v^iE_a A_i && \in X_{-1}\\
F_2(A,B)&=-v^i(A_iB_a + A_a B_i) && \in X_0 \\
F_2(E,\lambda)&=F_2(A,\lambda)=0\,.
}
The quasi-isomorphism relation for $\tilde{b}_3$ simplifies
drastically thanks to the simple form of the original L$_\infty$
algebra. Separated into knowns on the left hand side  and unknowns on the right, it reads 
\eq{
 &\tilde{b}_3(x_1,x_2,x_3)+\tilde{b}_2(x_1,F_2(x_2,x_3))\\
 &+\tilde{b}_2(F_2(x_1,x_2),x_3) +(-1)^{x_2x_3}\tilde{b}_2(F_2(x_1,x_3),x_2)
 \\[3pt]
 &= F_3(b_1(x_1),x_2,x_3)+(-1)^{x_1}F_3(x_1,b_1(x_2),x_3)\\
 &\quad+(-1)^{x_1+x_2}F_3(x_1,x_2,b_1(x_3))-b_1(F_3(x_1,x_2,x_3))\,.
}
Evaluating these on the list of inputs $(E_1,E_2,\lambda)$, $(E,A,B)$,
$(E,A,\lambda)$, $(A,B,C)$, $(A,B,\lambda)$, $(E,\lambda_1,\lambda_2)$, and $(A,\lambda_1,\lambda_2)$ we can read
off the non-vanishing QISO maps $F_3$:
\eq{
\label{2}
 F_3(E,A,B)&=v^iv^jE_a A_iB_j && \in X_{-1}\\
 F_3(A,B,C)&=v^iv^j(A_iB_jC_a + A_jB_a C_i+A_a B_iC_j) && \in X_0\,. 
}
Let us take special note of the quasi-isomorphism acting on identical gauge fields
\eq{ 
F_2(A,A)_a&=-2v^iA_iA_a \\
F_3(A,A,A)_a&=3v^iv^jA_iA_jA_a\,.
}
Noting also that except for $F_1={\rm id}$ the quasi-isomorphism acts
on gauge parameters trivially,  we see a nice connection to the Seiberg-Witten map:
\eq{
\hat{A}_a(A) &= F_1(A) + \frac{1}{2}F_2(A,A)+ \frac{1}{6} F_3(A,A,A)+\ldots
\\
\hat{\lambda}(\lambda,A)&= F_1(\lambda)
\;.
}
Moreover, the new field equation \eqref{SWfielde} can be expressed in
terms of the QISO as
\eq{
\hat{\mathcal{F}}(\mathcal{F},A)=
F_1({\cal F}) + F_2({\cal F},A)+ \frac{1}{2} F_3({\cal F},A,A)+\ldots\,.
}
In addition, the form of the maps \eqref{1} and  \eqref{2} suggests the general solution
\eq{
\label{ansatz}
F_n(A^1,\dots, A^n)_a&=(-1)^{n-1} v^{i_1}\dots
v^{i_{n-1}}\left(A^1_{i_1}\dots A^{n-1}_{i_{n-1}}\, A^n_a+ \, {\rm cycl} \, \right)\\
F_{n+1}(E,A^1,\dots, A^n)_a&=(-1)^n\, v^{i_1}\dots v^{i_n}\, E_a\, A^1_{i_1}\dots A^n_{i_n}\\[0.1cm]
F_1(\lambda)&=\lambda
}
with  all other maps being zero. By inspection of equation
\eqref{PRNEWQISO} one can see that starting with a given L$_\infty$
algebra $(V,\left\lbrace b_i \right\rbrace)$ an arbitrary set of
graded symmetric, multilinear  maps $F_n:V\rightarrow V,\, n\ge 2$ of
degree 0  together with $F_1={\rm id}$ defines a new, quasi-isomorphic
L$_\infty$ algebra $(V,\big\lbrace \hat{b}_i
\big\rbrace)$. Therefore we can take the maps \eqref{ansatz} to
complete the solution \eqref{u1cs-bmaps} to all orders. The
constructed theory is then related to the usual Chern-Simons theory
via the 
Seiberg-Witten maps
\eq{
\hat{A}_a(A)&=A_a\, \mathrm{exp}(-v^i A_i)\\
\hat{\lambda}(\lambda,A)&=\lambda\\
\hat{\mathcal{F}}(\mathcal{F},A)&=\mathcal{F}(A)\, \mathrm{exp}(-v^i A_i)\; .
}
Thus, analogously to the fact that the gauge transformations
and the field equations are encoded in the higher products $b_n$ of an 
L$_\infty$ algebra, also the SW-like field redefinitions seem to be encoded
in the $F_n$ maps of an L$_\infty$ QISO.
In the following section we will make this connection more precise.

\section{SW maps are L$_\infty$ QISOs}
\label{sec_main}

Based on the example in the last section we analyze the general
relation between Seiberg-Witten maps and L$_\infty$
quasi-isomorphisms.  Assume we are given a quasi-isomorphism
$\left\lbrace F_n \right\rbrace :V\rightarrow W$  between two gauge
theories $(V=V_1\oplus V_0\oplus V_{-1}, \left\lbrace
  b_i\right\rbrace)$, $(W=W_1\oplus W_0\oplus W_{-1}, \big\lbrace
\tilde{b}_i\big\rbrace) $. Recall that the maps of a quasi-isomorphism
are of degree $0$. Thus an element $\hat{\lambda}\in W_1$ can be of
the schematic form 
\eq{
\hat{\lambda}\sim \underset{n}{\sum}\,
F_{n+1}(\lambda,A^n)&+\underset{k}{\sum} \,
F_{k+3}(\lambda,\mu,E,A^k)\\
&+\underset{k}{\sum} \, F_{k+5}(\lambda,\mu_1,\mu_2,E_1,E_2,A^k)+\dots \; .
}
The first sum is what is expected from the point of view of a
Seiberg-Witten map. All other terms contain at least two gauge
parameters $\lambda$, $\mu$. As we are working on the level of infinitesimal gauge
transformations those terms are suppressed and we don't include them in the expansion. Similarly,
for a gauge field $\hat{A}\in W_0$ there is the expansion
\eq{
\hat{A}\sim\! \underset{n}{\sum} \, F_n(A^n)+\underset{k}{\sum}\, F_{k+2}(\mu,E,A^k)+\underset{k}{\sum} \, F_{k+4}(\mu_1,\mu_2,E_1,E_2,A^n)+\dots.
}
Again, the first term is expected. Terms in the third sum and all
higher terms are at least of second order in the gauge parameters and
therefore suppressed. From a physics point of view  the second sum does not make much sense either, as the field of the hatted theory would depend on the gauge parameter $\mu$. It is therefore reasonable to ignore those, too and take only the first sum. Next, an element $\hat{E}\in W_{-1}$ can have the expansion
\eq{
\hat{E}\sim \underset{n}{\sum}\, F_{n+1}(E, A^n)+\underset{k}{\sum} \, F_{k+3}(\mu,E_1,E,A^k)+\dots \; .
}
Elements of degree $-1$ in the L$_\infty$ algebra are related to the equations of motion. A possibly non vanishing term in the second sum would mean that the field equations of the hatted theory depend on a random gauge parameter $\mu$. Thus it is sensible to drop those.  

Note that these general considerations are consistent  with the example in the
last section where the  only non-trivial quasi-isomorphism maps were
$F_2(A,B)$, $F_2(A,E)$, $F_3(A,B,C)$ and $F_3(A,B,E)$. In addition, the discussion in the
example went  beyond that of a mere Seiberg-Witten map since we
included the field equations.  After these considerations we are now
ready to formulate a theorem that clearly relates a SW map  and an 
L$_\infty$ QISO.  The following theorem is the main result of this paper:

\begin{theo} \label{theo}
Let $(V=V_1\oplus V_0\oplus V_{-1}, \left\lbrace b_i\right\rbrace)$, $(W=W_1\oplus W_0\oplus W_{-1}, \big\lbrace  \tilde{b}_i\big\rbrace) $ be two L$_\infty$ algebras underlying two classical gauge theories. Then 
\begin{itemize}
\item[A)] There exists a Seiberg-Witten map
  $\hat{\lambda}=\hat{\lambda}(\lambda,A)$ and $\hat{A}=\hat{A}(A)$,
satisfying
\eq{\
\hat{A}(A+\delta_\lambda A)=
\hat{A}(A)+\hat{\delta}_{\hat{\lambda}(\lambda, A)}\hat{A}(A)\; 
}
and the closure mapping
\eq{
\hat{A}&(A+\delta_{C(\lambda_1,\lambda_2,A)}\, A+\delta^{T}_{C(\lambda_1,\lambda_2,\mathcal{F})} A)\\
&=\hat{A}(A)+\hat{\delta}_{\hat{C}(\hat{\lambda}_1,\hat{\lambda}_2,\hat{A})+\hat{\lambda}(\lambda_2,\delta_{\lambda_1}
A)-\hat{\lambda}(\lambda_1,\delta_{\lambda_2} A) } \, \hat{A}(A)
+\hat{\delta}^T_{\hat{C}(\hat{\lambda}_1,\hat{\lambda}_2,\mathcal{F})} \hat{A}(A)\,, 
}
between the gauge theories if and only if there exist graded symmetric maps  $\left\lbrace F_n \right\rbrace :V\rightarrow W$ of degree 0 with 
\eq{\label{theocond1}
F_{n+k+l}\left(\lambda_1,\dots,\lambda_k,E_1,\dots,E_l,A^n\right)=0, \quad {\rm for\ all\ } \; k,l\in \left\lbrace 1,2 \right\rbrace, \; n\ge 0
}
satisfying the L$_\infty$ quasi-isomorphism relations for inputs $(\lambda, A^n)$,$(\lambda_1,\lambda_2,A^n)$,\linebreak
$(\lambda_1,\lambda_2,E,A^n)$ for all $n\ge 0$.

\item[B)] The Seiberg -Witten map of A) maps the dynamics of the field theories according to
\eq{
\hat{\mathcal{F}}&=\hat{\mathcal{F}}(\mathcal{F},A), \\
\hat{\mathcal{F}}(\mathcal{F}+\delta_\lambda \mathcal{F}, A+\delta_\lambda A)&=\hat{\mathcal{F}}(\mathcal{F},A)+\hat{\delta}_{\hat{\lambda}(\lambda,A)} \hat{\mathcal{F}}(\mathcal{F},A)
} 
if and only if the graded symmetric maps of A) satisfy the L$_\infty$ quasi-isomorphism relations on inputs $(A^n)$,$(\lambda,E, A^n)$ for all $n\ge 0$.
\end{itemize}
\end{theo}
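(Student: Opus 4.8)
The plan is to prove the biconditional by fixing an explicit dictionary between the component maps $F_n$ and the Seiberg–Witten data, and then showing that each Seiberg–Witten-type condition is, order by order in the gauge field, the resummation of the quasi-isomorphism relations \eqref{defiquasi} restricted to one prescribed class of inputs. Concretely I would set
\[
\hat A(A)=\sum_{n}\tfrac{1}{n!}\,F_n(A^n)\,,\qquad
\hat\lambda(\lambda,A)=\sum_{n}\tfrac{1}{n!}\,F_{n+1}(\lambda,A^n)\,,\qquad
\hat{\mathcal F}(\mathcal F,A)=\sum_{n}\tfrac{1}{n!}\,F_{n+1}(\mathcal F,A^n)
\]
so that the leading sums in the degree decompositions preceding the theorem are exactly these generating functions; conversely, given a Seiberg–Witten map one recovers the $F_n$ by polarisation. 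Two structural facts make the matching close: first, the grading $V=V_1\oplus V_0\oplus V_{-1}$ forces every map whose output degree lies outside $\{1,0,-1\}$, such as $F(\lambda_1,\lambda_2,A^n)$, to vanish automatically; second, the constraint \eqref{theocond1} kills precisely the mixed gauge-parameter/equation-of-motion maps $F(\lambda,E,A^n)$, $F(\lambda_1,\lambda_2,E,A^n)$ and $F(\lambda,E_1,E_2,A^n)$ that would otherwise obstruct the identification.

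For part A I would first treat the basic condition \eqref{SWMAP1}. Expanding $\hat A(A+\delta_\lambda A)$ to first order in $\lambda$ (legitimate since $\delta_\lambda A$ is linear in $\lambda$) yields $\sum \tfrac{1}{m!\,k!}\,F_{m+1}\big(b_{k+1}(\lambda,A^k),A^m\big)$, which are exactly the left-hand terms of \eqref{defiquasi} on inputs $(\lambda,A^{m+k})$ in which the inner bracket $b_k$ carries the gauge parameter. The remaining left-hand terms have $b_k$ acting on gauge fields only, producing a degree $-1$ element, so that $F$ is evaluated on one $\lambda$, one such $E$ and fields; these die by \eqref{theocond1}. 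Expanding the right-hand side $\hat\delta_{\hat\lambda}\hat A=\sum\tfrac{1}{m!}\tilde b_{m+1}(\hat\lambda,\hat A^m)$ and sorting the parameter into the unique factor $\hat\lambda$ reproduces the right-hand side of \eqref{defiquasi}. This establishes the equivalence order by order, in both directions at once.

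The closure statement is the delicate step. I would show, in the same manner, that the quasi-isomorphism relations on $(\lambda_1,\lambda_2,A^n)$ resum to the degree $1$ identity
\[
\hat\lambda\big(C(\lambda_1,\lambda_2,A),A\big)+\hat\lambda(\lambda_2,\delta_{\lambda_1}A)-\hat\lambda(\lambda_1,\delta_{\lambda_2}A)=\hat C(\hat\lambda_1,\hat\lambda_2,\hat A)\,,
\]
where the first term comes from the unshuffles placing both parameters inside $b_k$ (assembling the bracket $C$ of \eqref{closure}), the corrections from those placing exactly one parameter inside $b_k$, the terms with both parameters in a single $F$ vanish by the grading, and the terms with neither parameter inside $b_k$ vanish by \eqref{theocond1}. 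Feeding this identity into the basic map applied to the composite parameter $\mu=C(\lambda_1,\lambda_2,A)$, namely $\hat A(A+\delta_C A)=\hat A(A)+\hat\delta_{\hat\lambda(C,A)}\hat A$, reproduces the conjectured formula \eqref{SWCLOSURE2}; the on-shell piece $\delta^T$ is recovered identically from the relations on $(\lambda_1,\lambda_2,E,A^n)$, whose output has degree $0$. For the converse one reads the displayed identity off the field-level closure by using that the redefinition $\hat A$ is invertible and the gauge freedom is irreducible, so that the parameter entering $\hat\delta_{\bullet}\hat A$ is determined.

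Part B follows the same template with the equation-of-motion element: the relations on $(A^n)$ identify the hatted equations of motion $\sum\tfrac{1}{n!}\tilde b_n(\hat A^n)$ with $\hat{\mathcal F}(\mathcal F,A)$ (using $\mathcal F=\sum\tfrac{1}{n!}b_n(A^n)$), giving the first line of B, while the relations on $(\lambda,E,A^n)$ expanded to first order in $\lambda$, with \eqref{theocond1} removing the $F(\lambda,E_1,E_2,A^n)$ terms, give the gauge covariance in the second line. The main obstacle I anticipate is controlling the Koszul signs and the multinomial factors $1/j!$ through the unshuffle sums in the closure step, in particular verifying that the one-parameter-inside-$b_k$ contributions assemble with the correct relative sign into $+\hat\lambda(\lambda_2,\delta_{\lambda_1}A)-\hat\lambda(\lambda_1,\delta_{\lambda_2}A)$; this is the same elementary but tedious bookkeeping that produced \eqref{commuexp}, now organised once and for all by the generating-function resummation. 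A secondary point needing care is the consistency of the two descriptions of $\hat{\mathcal F}$, as the generating function of the $F_{n+1}(\mathcal F,A^n)$ and as the equations of motion of the hatted theory, which is itself guaranteed by the relations on $(A^n)$.
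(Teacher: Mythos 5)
Your proposal follows essentially the same route as the paper's own proof: the identical generating-function dictionary \eqref{DEFISW} (supplemented by $\hat{\mathcal F}(\mathcal F,A)=\sum_n\tfrac{1}{n!}F_{n+1}(\mathcal F,A^n)$), the same order-by-order resummation identifying each Seiberg--Witten condition with the quasi-isomorphism relations \eqref{defiquasi} on the stated inputs, the same division of labour between the grading (killing maps whose output degree lies outside $\{1,0,-1\}$) and condition \eqref{theocond1} (killing the mixed $\lambda$/$E$ maps), the same two-step treatment of closure (an identity for $\hat\lambda(C(\lambda_1,\lambda_2,A),A)$ fed back into the basic condition, with the on-shell piece recovered from the inputs $(\lambda_1,\lambda_2,E,A^n)$), the same two lemmas for part B, and the same converse by reversing the computation and polarizing.

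There is, however, one concrete discrepancy, located exactly at the step you yourself flag as the main obstacle. Your closure identity is
\[
\hat\lambda\big(C(\lambda_1,\lambda_2,A),A\big)+\hat\lambda(\lambda_2,\delta_{\lambda_1}A)-\hat\lambda(\lambda_1,\delta_{\lambda_2}A)=\hat C(\hat\lambda_1,\hat\lambda_2,\hat A)\,,
\]
while the identity the paper proves inside its Lemma 2, and then feeds into Lemma 1 to obtain the closure mapping of the theorem, is
\[
\hat{\lambda}\big(C(\lambda_1,\lambda_2,A),A\big)=\hat{C}(\hat{\lambda}_1,\hat{\lambda}_2,\hat{A})+\hat{\lambda}(\lambda_2,\delta_{\lambda_1}A)-\hat{\lambda}(\lambda_1,\delta_{\lambda_2}A)\,.
\]
The two differ by the sign of the cross terms, and the difference propagates: substituting your identity into $\hat A(A+\delta_C A)=\hat A(A)+\hat\delta_{\hat\lambda(C,A)}\hat A(A)$ produces the gauge parameter $\hat C-\hat\lambda(\lambda_2,\delta_{\lambda_1}A)+\hat\lambda(\lambda_1,\delta_{\lambda_2}A)$, not the parameter $\hat C+\hat\lambda(\lambda_2,\delta_{\lambda_1}A)-\hat\lambda(\lambda_1,\delta_{\lambda_2}A)$ appearing in the statement you are asked to prove; so, as written, your argument establishes a sign-variant of the theorem's closure mapping rather than the closure mapping itself. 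Your remark that your identity ``reproduces'' \eqref{SWCLOSURE2} does not settle this, because \eqref{SWCLOSURE2} (equivalently \eqref{commuexp}) and the theorem's closure formula themselves differ by precisely these signs under the relabelling $[\lambda_1,\lambda_2]\to C$, $[\hat\lambda_1,\hat\lambda_2]_\star\to\hat C$: agreeing with one means disagreeing with the other, and your proposal silently identifies the two. To complete a proof of the theorem as stated you must carry out the Koszul bookkeeping in \eqref{defiquasi} on the inputs $(\lambda_1,\lambda_2,A^n)$ explicitly --- the decisive sign is the $\epsilon(\sigma;x)=-1$ produced when $\lambda_2$ is unshuffled past $\lambda_1$ into the inner bracket --- and then either land on the theorem's form or state clearly that the relations force your form, in which case the mismatch with the theorem's printed signs (and with the paper's Lemma 2) has to be confronted rather than elided.
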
 

\vspace{0.2cm}
\noindent
In the remainder of this section we prove this theorem.
\begin{proof}
We start with the proof of $A)$. Assume that there exist graded symmetric maps $\left\lbrace F_n\right\rbrace:V\rightarrow W$ of degree $0$ s.th. \eqref{theocond1} holds and the maps satisfy the L$_\infty$ quasi-isomorphism relations on the inputs stated in $A)$. We then define
the corresponding Seiberg-Witten maps as
\eq{
\label{DEFISW}
\hat{A}(A)=\underset{n=1}{\overset{\infty}{\sum}}\, \frac{1}{n!}\, F_n(A^n)\,,\qquad
\hat{\lambda}(\lambda,A)=\underset{k=0}{\overset{\infty}{\sum}}\, \frac{1}{k!}\, F_{k+1}(\lambda,A^k)\;.
}
Recall from section 3 the form of the gauge variations \eqref{gaugetrafodefi} in terms of L$_\infty$ brackets
\eq{
\label{VARDEF}
\delta_\lambda\, A=\underset{n=0}{\overset{\infty}{\sum}}\, \frac{1}{n!}\, b_{n+1}(\lambda,A^n)\,,\qquad
\hat{\delta}_{\hat{\lambda}}\, \hat{A}=\underset{n=0}{\overset{\infty}{\sum}}\, \frac{1}{n!}\,\tilde{b}_n(\hat{\lambda},\hat{A}^n)\;.
}
Using the defining relations of an L$_\infty$ quasi-isomorphism we
prove a first lemma that says that gauge orbits are mapped to gauge orbits.
\begin{lem}
\eq{\label{SWMAP}
\hat{A}(A+\delta_\lambda \, A)=\hat{A}(A)+\hat{\delta}_{\hat{\lambda}}\, \hat{A}(A)
}
\end{lem}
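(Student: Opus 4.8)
The plan is to verify \eqref{SWMAP} order by order in the number of background field insertions $A$, by showing that at each such order the identity is nothing but the L$_\infty$ morphism relation \eqref{defiquasi} evaluated on the inputs $(\lambda,A^{N})$. Since $A$ carries degree $0$ and $\lambda$ degree $1$, every Koszul sign that appears in this computation is $+1$, which removes the only genuinely delicate bookkeeping and lets us concentrate on the combinatorial weights.

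First I would expand the left-hand side. Using multilinearity and graded symmetry of the $F_n$ together with the fact that $\delta_\lambda A$ is linear in $\lambda$, only a single insertion of $\delta_\lambda A$ survives at linear order, so that
\eq{
\hat A(A+\delta_\lambda A)-\hat A(A)
=\sum_{m=0}^\infty \frac{1}{m!}\,F_{m+1}\big(\delta_\lambda A,A^{m}\big)
=\sum_{m,k\ge 0}\frac{1}{m!\,k!}\,F_{m+1}\big(b_{k+1}(\lambda,A^{k}),A^{m}\big)\,,
}
where in the last step I inserted the definition \eqref{VARDEF} of $\delta_\lambda A$. Collecting the terms with a fixed total number $N=m+k$ of fields and using $\binom{N}{k}=N!/(k!\,(N-k)!)$, this order-$N$ piece is exactly $1/N!$ times the part of the left-hand side of \eqref{defiquasi}, evaluated on $(\lambda,A^{N})$, in which the inner bracket $b_k$ receives the gauge parameter $\lambda$. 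The remaining contributions to that left-hand side are those in which $b_k$ acts on fields only; there $b_k(A^{k})\in V_{-1}$ is an equation-of-motion element, and the corresponding $F$-map then carries one $\lambda$, one such element, and only fields, so it vanishes identically by the constraint \eqref{theocond1}. Hence the full left-hand side of \eqref{defiquasi} reduces precisely to the order-$N$ term computed above.

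Next I would expand the right-hand side $\hat\delta_{\hat\lambda}\hat A(A)$ by substituting the series \eqref{DEFISW} for $\hat\lambda$ and $\hat A$ into \eqref{VARDEF} and retaining only the linear-in-$\lambda$ part, which forces exactly one factor $F_{k+1}(\lambda,A^{k})$ to occupy the gauge-parameter slot while every other slot is filled by some $F_{j}(A^{j})$. Grouping again by the total field number $N$, the resulting multiple sum is to be identified, term by term, with $1/N!$ times the right-hand side of \eqref{defiquasi} on the inputs $(\lambda,A^{N})$: the $j$ maps $F_{k_1},\dots,F_{k_j}$ partitioning the inputs correspond to the one parameter-carrying map together with the $p=j-1$ field maps feeding $\tilde b_{p+1}$, and the unshuffle sum over $\sigma(K)$ weighted by $1/j!$ reproduces precisely the multinomial factor coming from the exponential-type expansions of $\hat A$ and $\hat\lambda$. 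Since by hypothesis the $F_n$ satisfy \eqref{defiquasi} on all inputs $(\lambda,A^{N})$, the two sides agree at every order $N$, which establishes \eqref{SWMAP}.

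The step I expect to be the main obstacle is precisely this last matching of combinatorial weights: one must check carefully that the unshuffle counting on the bracket side of \eqref{defiquasi} collapses to the binomial factor $\binom{N}{k}$, and that the partition-plus-unshuffle counting on the morphism side collapses to the product of factorials produced by expanding the field and parameter series. Once these identifications with \eqref{defiquasi} are in place and the spurious equation-of-motion terms have been removed by \eqref{theocond1}, the statement follows with no further input beyond the assumed quasi-isomorphism relations on the inputs $(\lambda,A^{N})$.
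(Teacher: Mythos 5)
Your proposal is correct and follows essentially the same route as the paper's own proof: expand both sides of \eqref{SWMAP} to linear order in $\lambda$, recognize the result as the morphism relation \eqref{defiquasi} evaluated on $(\lambda,A^{N})$, discard the terms $F_{n+1}(\mathcal{F},\lambda,A^{n-1})$ via \eqref{theocond1}, and match the multinomial/unshuffle combinatorics against the expansion of $\hat{\delta}_{\hat{\lambda}}\hat{A}(A)$. The only cosmetic difference is that you organize the argument order by order in the total field number $N$ rather than manipulating the full sums at once, which changes nothing of substance.
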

\begin{proof}
We start with the left hand side of \eqref{SWMAP} for which we can
calculate
\begin{eqnarray}
\label{SWLHS}
\hat{A}(A+\delta_\lambda A)&=&\hat{A}\left(A+\underset{k=0}{\overset{\infty}{\sum}}\, \frac{1}{k!}\, b_{k+1}(\lambda,A^k)\right)\nonumber\\
&=&\underset{n=1}{\overset{\infty}{\sum}}\, \frac{1}{n!}\, F_n\bigg(A+\underset{k=0}{\overset{\infty}{\sum}}\, \frac{1}{k!}\, b_{k+1}(\lambda,A^k),\dots,A+\underset{k=0}{\overset{\infty}{\sum}}\, \frac{1}{k!}\,b_{k+1}(\lambda,A^k)\bigg)\nonumber\\
&=&\underset{n=1}{\overset{\infty}{\sum}}\, \frac{1}{n!}\,\left[ F_n\left(A^{n}\right)+n\, F_n\left(A^{n-1},\underset{k=0}{\overset{\infty}{\sum}}\, \frac{1}{k!}\,b_{k+1}(\lambda,A^k)\right)+\mathcal{O}(\lambda^2)\right]\nonumber\\
&=&\hat{A}(A)+\underset{n=0}{\overset{\infty}{\sum}}\, \frac{1}{n!}\,\left[ F_{n+1}\left(\underset{k=0}{\overset{\infty}{\sum}}\, \frac{1}{k!}\,b_{k+1}(\lambda,A^k),A^n\right)\right]\\
&=&\hat{A}(A)+\underset{m=0}{\overset{\infty}{\sum}}\,\frac{1}{m!}\underset{k+n=m}{\sum}\, \frac{m!}{n!k!}\, F_{n+1}\left(b_{k+1}(\lambda,A^k),A^n\right)\nonumber\,.
\end{eqnarray}
The first two equalities are just writing the variation in terms of
L$_\infty$ brackets and using the definition of $\hat{A}(A)$ in terms
of the $F_n$'s. In the next equality the linearity of the map $F_n$ is
used and we truncated to linear order in the infinitesimal gauge
parameter $\lambda$. Next we changed the index of the outer sum and switched
the order of the entries, which doesn't cause extra minus signs as all
elements are of degree $0$. In the last step we just rewrote the sums
in a more convenient form and inserted $\frac{m!}{m!}$. 

To proceed, we recall the left hand side 
of the defining equation for an L$_\infty$ quasi-isomorphism \eqref{defiquasi}:
\eq{
\!\!\!\!\!\!\underset{\sigma \in\, {\rm Unsh}(k+n=m+1)}{\sum} \, \epsilon(\sigma; x)\, F_{1+n}\Big(b_{k+1}(x_{\sigma(1)},\dots, x_{\sigma(k+1)}),x_{\sigma(k+2)},\dots, x_{\sigma(m+1)}\Big)\,.
}
Choosing the input $x_1=\lambda,\, x_2=\dots=x_{m+1}=A$ and taking
into account that interchanging any two element never causes a minus
sign this expression becomes
\eq{\label{QUSIAF}
\underset{k+n=m}{\sum}\, \frac{m!}{k!n!}\,
F_{1+n}&\Big(b_{k+1}(\lambda,A^k),A^n\Big)\\
&+\frac{m!}{(k+1)!(n-1)!}\, F_{1+n}\Big(b_{k+1}(A^{k+1}),\lambda,A^{n-1}\Big)\;.
}
In the first summand the prefactor $\frac{m!}{k!n!}$ is the number of
unshuffles for the $m$ gauge fields $A$ into partitions of length $k$ and $n$. The
same holds for the second summand\footnote{This is slightly cheating,
  since there is a term with $n=0$ and we set (-1)!=1. }. Upon
bringing the second term in \eqref{QUSIAF} to the right hand side, the
left hand side  of the
defining relation for an L$_\infty$ quasi-isomorphism appears in
\eqref{SWLHS}. We first investigate what happens to the second term in \eqref{QUSIAF}:
\eq{
&\,\underset{m=0}{\overset{\infty}{\sum}}\frac{1}{m!}\; \underset{k+n=m}{\sum}\, \frac{m!}{(k+1)!(n-1)!}\, F_{n+1}\left(b_{k+1}(A^{k+1},\lambda,A^{n-1}\right)\\
=&\,\underset{m=0}{\overset{\infty}{\sum}}\; \underset{k+n=m+1}{\sum}\,\frac{1}{k!(n-1)!}\, F_{n+1}\left(b_k(A^k),\lambda,A^{n-1}\right)\\
=&\,\underset{n=0}{\overset{\infty}{\sum}}\,\frac{1}{(n-1)!} F_{n+1}\left(\underset{k=1}{\overset{\infty}{\sum}}\,\frac{1}{k!}\, b_k(A^k),\lambda,A^{n-1}\right)\\
=&\,\underset{n=0}{\overset{\infty}{\sum}}\,\frac{1}{(n-1)!} F_{n+1}\left(\mathcal{F},\lambda,A^{n-1}\right)=0\;.
}
In the last step the equation of motion is abbreviated by
$\mathcal{F}$. Thus, we realize that  in general there will appear more terms on the
right hand side  of the SW-condition \eqref{SWMAP}. However,  these
terms are proportional to 
the equation of motion and are of the type appearing in 
\eqref{theocond1} so that they actually vanish.

Employing now the quasi-isomorphism equation \eqref{SWLHS} for
 $x_1=\lambda, x_2=\dots=x_{n+1}=A$, one can express \eqref{SWLHS}
as
\eq{\label{ZWISCHEN}
\hat{A}(A)+\underset{m=0}{\overset{\infty}{\sum}}\, \frac{1}{m!}\,
\underset{\sigma\in {\rm Unsh}(k_1+\dots+k_j=m+1)}{\sum}\, \frac{\epsilon(x;\sigma)}{j!}\, \tilde{b}_j\left(F_{k_1}\otimes\dots\otimes F_{k_j}\right)(x_{\sigma(K)})\;.
}
We still stated the equation in an abstract form to highlight the
point where \eqref{defiquasi} is used. 
In order to unwrap the expression we note that 
\eq{
\Big|{\rm Unsh}(k_1+\dots+k_j=n)\Big|=\binom{n}{k_1,\dots,k_j}
}
meaning that there are $\binom{n}{k_1,\dots,k_j}$ possibilities to
order a set of size $n$ into $j$ partitions of length $k_i,\,
i=\left\lbrace1,\dots, j\right\rbrace$ preserving the order in each
partition. 
Using this we get for the second term in \eqref{ZWISCHEN}
\begin{eqnarray}
&&\!\!\!\!\underset{m=0}{\overset{\infty}{\sum}}\,
\frac{1}{m!}\,\underset{k_1+\dots+k_j=m+1}{\sum}\,
\frac{1}{j!}\left[\tilde{b}_j\big((F_{k_1}(\lambda,A^{k_1-1}),F_{k_2}(A^{k_2}),\dots,
  F_{k_j}(A^{k_j})\big)
{\textstyle \binom{m}{(k_1-1),k_2,\dots,k_j}}\right. \nonumber\\[0.1cm]
&&\hspace{2cm}+\tilde{b}_j\big(F_{k_1}(A^{k_1}),F_{k_2}(\lambda,A^{k_2-1}),\dots,F_{k_j}(A^{k_j}\big)
{\textstyle \binom{m}{k_1,(k_2-1),\dots,k_j}}\nonumber\\
&&\hspace{2cm}\vdots \\
&&\hspace{2cm}+\left. \tilde{b}_j\big(F_{k_1}(A^{k_1}),\dots,
  F_{k_j}(\lambda,A^{k_j})\big) 
{\textstyle \binom{m}{k_1,\dots, (k_j-1)}}\right]\;.\nonumber
\end{eqnarray}
Of course the terms in the squared bracket are all the same. Using the
definition of the multinominal-coefficient this gives 
\eq{\label{ZWISCHEN2}
&\underset{m=0}{\overset{\infty}{\sum}}\,\underset{k_1+\dots+k_j=m+1}{\sum}\,
{\textstyle \frac{1}{(j-1)!}\,\frac{1}{(k_1-1)!\cdots k_j!}}
\, \tilde{b}_j\big(F_{k_1}(\lambda,A^{k_1-1}),F_{k_2}(A^{k_2}),\dots, F_{k_j}(A^{k_j})\big)\\
=&\underset{m=0}{\overset{\infty}{\sum}}\,\underset{k_0+\dots+k_{j}=m}{\sum}\,\frac{1}{j!}\,\frac{1}{k_0!\cdots k_{j}!}\, \tilde{b}_{j+1}\big(F_{k_0+1}(\lambda,A^{k_1}),F_{k_1}(A^{k_1}),\dots, F_{k_{j}}(A^{k_{j}})\big)\\
=&\underset{j=0}{\overset{\infty}{\sum}}\frac{1}{j!}\, \underset{k_0\ge 0}{\sum}\, \underset{k_1,\dots,k_{j}\ge 1}{\sum}\, \frac{1}{k_0!\cdots k_{j}!}\, \tilde{b}_{j+1}\big((F_{k_0+1}(\lambda,A^{k_0}),F_{k_1}(A^{k_1}),\dots, F_{k_{j+1}}(A^{k_{j}})\big)
}
where we changed the summation indices $j$ and $k_1$ in the first step
and rewrote the summation in a more convenient way in the second. 

All this was for the left  hand side of \eqref{SWMAP}.
Next we compute the second term on the right hand  of \eqref{SWMAP}
\begin{eqnarray}
&&\hat{\delta}_{\hat{\lambda}}\, \hat{A}(A)=\underset{j=0}{\overset{\infty}{\sum}}\, \frac{1}{j!}\, \tilde{b}_{j+1}\Big(\hat{\lambda}(\lambda,A),\hat{A}(A)^j\Big)\nonumber\\
&&\hspace{1.5cm}=\underset{j=0}{\overset{\infty}{\sum}}\, \frac{1}{j!}\, \tilde{b}_{j+1}\left(\underset{n=0}{\sum^\infty}\frac{1}{n!}\, F_{n+1}(\lambda,A^n),\Big(\underset{k=1}{\sum^\infty}\frac{1}{k!}\, F_k(A^k)\Big)^j\right)\\
&&\hspace{0.3cm}=\underset{j=0}{\overset{\infty}{\sum}}\, \frac{1}{j!} \,
\underset{n=0}{\sum^\infty}\, \underset{k_1,\dots, k_j \ge 1}{\sum}\,
 \frac{1}{n!}\frac{1}{k_1!\cdots k_j!}\, \tilde{b}_{j+1}\big(F_{n+1}(\lambda,A^n),F_{k_1}(A^{k_1}),\dots, F_{k_j}(A^{k_j})\big)\;.\nonumber
\end{eqnarray}
But this is exactly \eqref{ZWISCHEN2} which is equivalent to the
second term in \eqref{ZWISCHEN}. This proves Lemma 1.
\end{proof}

Next we want to derive the closure statement. Note that we can be  more general than in section 3 and consider a gauge algebra which closes only on shell, i.e. 
\eq{
\left[\delta_{\lambda_1}, \delta_{\lambda_2}\right] A=\delta_{C(\lambda_1,\lambda_2,A)}\, A+\delta^{T}_{C(\lambda_1,\lambda_2,\mathcal{F})} A\; . 
}
If we ignore this term for the moment we can prove the guess of section 3.
\begin{lem}
\eq{\label{dashalt}
\hat{A}(A+\delta_{C(\lambda_1,\lambda_2,A)}\,
A)&=\hat{A}(A)+\hat{\delta}_{\hat{C}(\hat{\lambda}_1,\hat{\lambda}_2,\hat{A})+\hat{\lambda}(\lambda_2,\delta_{\lambda_1}
A)-\hat{\lambda}(\lambda_1,\delta_{\lambda_2} A) } \, \hat{A}(A)
}
\end{lem}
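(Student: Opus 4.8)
The plan is to reduce the statement to an identity between gauge parameters and then extract that identity from the quasi-isomorphism relation on a suitable set of inputs. The starting observation is that $C(\lambda_1,\lambda_2,A)\in V_1$ is itself a legitimate gauge parameter of the unhatted theory. Hence Lemma 1 applies verbatim with $\lambda$ replaced by $C(\lambda_1,\lambda_2,A)$, treated as a frozen element of $V_1$ at the point $A$ (the expansion is taken to linear order in $C$, i.e.\ bilinear order in $\lambda_1,\lambda_2$). This immediately yields
\eq{\label{Cgauge}
\hat{A}(A+\delta_{C(\lambda_1,\lambda_2,A)}A)=\hat{A}(A)+\hat{\delta}_{\hat{\lambda}(C(\lambda_1,\lambda_2,A),A)}\,\hat{A}(A)\,,
}
so that \eqref{dashalt} is equivalent to the purely algebraic identity between gauge parameters
\eq{\label{paramid}
\hat{\lambda}(C(\lambda_1,\lambda_2,A),A)=\hat{C}(\hat{\lambda}_1,\hat{\lambda}_2,\hat{A})+\hat{\lambda}(\lambda_2,\delta_{\lambda_1}A)-\hat{\lambda}(\lambda_1,\delta_{\lambda_2}A)\,,
}
where $\hat{C}(\hat{\lambda}_1,\hat{\lambda}_2,\hat{A})=\sum_{n}\frac{1}{n!}\tilde{b}_{n+2}(\hat{\lambda}_1,\hat{\lambda}_2,\hat{A}^n)$ is the closure function of the hatted theory and $\hat{\lambda}_i=\hat{\lambda}(\lambda_i,A)$, $\hat{A}=\hat{A}(A)$. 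The entire content of the lemma is therefore to prove \eqref{paramid}.

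To establish \eqref{paramid} I would feed the inputs $(\lambda_1,\lambda_2,A^m)$ into the quasi-isomorphism relation \eqref{defiquasi}, sum over $m\ge 0$ with weight $1/m!$, and reorganise the unshuffle sums exactly as in the proof of Lemma 1. Both sides then land in $W_1$: the inputs carry total degree $2$ and each side contains a single bracket of degree $-1$. On the left-hand side I would classify the terms according to how the inner bracket $b$ distributes the two gauge parameters. When $b$ swallows both $\lambda_1$ and $\lambda_2$ one gets the terms $F_{1+l}(b_{p+2}(\lambda_1,\lambda_2,A^p),A^l)$, which resum precisely to $\hat{\lambda}(C(\lambda_1,\lambda_2,A),A)$. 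When $b$ swallows exactly one parameter, say $\lambda_1$, the outer map is $F_{1+l}(b_{p+1}(\lambda_1,A^p),\lambda_2,A^{l-1})$; resumming the inner bracket to $\delta_{\lambda_1}A$ and the outer maps to $\hat{\lambda}(\lambda_2,\,\cdot\,)$ produces the directional-derivative term $\hat{\lambda}(\lambda_2,\delta_{\lambda_1}A)$, together with its mirror under $1\leftrightarrow 2$. Finally, when $b$ swallows only fields, the outer map has the form $F(\mathcal{F},\lambda_1,\lambda_2,A^{\dots})$, carrying two gauge parameters and one equation-of-motion entry, and by the vanishing condition \eqref{theocond1} with $k=2,\,l=1$ these terms drop out.

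On the right-hand side of \eqref{defiquasi} the two parameters are distributed among the $j$ outer maps $F_{k_i}$. If $\lambda_1$ and $\lambda_2$ sit in two different factors one obtains $\tilde{b}_j(F(\lambda_1,A^{\dots}),F(\lambda_2,A^{\dots}),F(A^{\dots}),\dots)$, whose resummation is exactly $\hat{C}(\hat{\lambda}_1,\hat{\lambda}_2,\hat{A})$. If instead both parameters land in the same factor, that factor is $F_{p+2}(\lambda_1,\lambda_2,A^p)$, an element of degree $2$; since the target carries no degree-$2$ part, $W_2=0$, such terms vanish identically. Equating the two sides and moving the one-parameter terms to the right reproduces \eqref{paramid}, and combining with \eqref{Cgauge} completes the proof.

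The main obstacle is bookkeeping rather than conceptual: one must track the multinomial weights generated by the unshuffles and by the expansions of $\hat{\lambda}_i$, $\hat{A}$ and $C$ in terms of the $F_n$ and $b_n$, and in particular verify that the factor $2$ from the two ways of placing $\lambda_1,\lambda_2$ in distinct $F$-factors precisely cancels the $1/j!$ symmetry factor, so that the two distinguished slots of $\tilde{b}_{n+2}$ are correctly identified with $\hat{\lambda}_1$ and $\hat{\lambda}_2$. The only genuinely delicate point is the relative sign of the two directional-derivative terms: since all $A$'s have degree $0$ and both $\lambda$'s have degree $1$, the Koszul signs are governed solely by the interchange of $\lambda_1$ and $\lambda_2$, and a careful evaluation of $\epsilon(\sigma;x)$ in the one-parameter terms is what fixes the antisymmetric combination $+\hat{\lambda}(\lambda_2,\delta_{\lambda_1}A)-\hat{\lambda}(\lambda_1,\delta_{\lambda_2}A)$ of \eqref{dashalt}, consistently with the first-order result \eqref{commuexp}.
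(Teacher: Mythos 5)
Your proposal is correct and follows essentially the same route as the paper: the paper likewise invokes Lemma 1 to reduce the claim to the parameter identity $\hat{\lambda}(C(\lambda_1,\lambda_2,A),A)=\hat{C}(\hat{\lambda}_1,\hat{\lambda}_2,\hat{A})+\hat{\lambda}(\lambda_2,\delta_{\lambda_1}A)-\hat{\lambda}(\lambda_1,\delta_{\lambda_2}A)$, and then asserts that this follows by the ``analog steps'' of Lemma 1's proof. Your term-by-term classification of the quasi-isomorphism relation on inputs $(\lambda_1,\lambda_2,A^m)$ --- including the use of \eqref{theocond1} with $k=2$, $l=1$ for the equation-of-motion terms and the degree argument $W_2=0$ for terms with both parameters in one $F$-factor --- is precisely the bookkeeping the paper leaves implicit.
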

\begin{proof}
Upon using lemma 1 this amounts to
\eq{
\hat{\lambda}(C(\lambda_1,\lambda_2,A),A)=\hat{C}(\hat{\lambda}_1,\hat{\lambda}_2,\hat{A})+\hat{\lambda}(\lambda_2,\delta_{\lambda_1} A)-\hat{\lambda}(\lambda_1,\delta_{\lambda_2} A)\; .
}
This equality is readily proven by going through the analog steps as in the proof of Lemma 1.
\end{proof}

Now we check what happens when  we include the term
$\delta^{T}_{C(\lambda_1,\lambda_2,\mathcal{F})} A$  in
the closure condition. Since the gauge algebras in the original
Seiberg-Witten map closed off-shell, there is no obvious guess. We
will utilize the quasi-isomorphism to derive a transformation rule. We start with 
 \eq{
 \hat{A}\big(A+\delta_{C(\lambda_1,\lambda_2,A)}\,
 A&+\delta^{T}_{C(\lambda_1,\lambda_2,\mathcal{F})} A\big)\\
&=\underset{n=1}{\overset{\infty}{\sum}}\, \frac{1}{n!}\, F_n\big((A+\delta_{C(\lambda_1,\lambda_2,A)}\, A+\delta^{T}_{C(\lambda_1,\lambda_2,\mathcal{F})} A)^n \big)\\
 &=\underset{n=1}{\overset{\infty}{\sum}}\, \bigg[\frac{1}{n!}\, F_n(A^n)+\frac{1}{(n-1)!}\, F_n\big(\delta_{C(\lambda_1,\lambda_2,A)}\, A, A^{n-1}\big)\\
 &\phantom{===}+\frac{1}{(n-1)!}\, F_n\big(\delta^{T}_{C(\lambda_1,\lambda_2,\mathcal{F})} A, A^{n-1}\big)+\mathcal{O}(\lambda^3)\bigg]\; ,
 }
where  the first two terms are the ones we already computed for \eqref{dashalt}.  The third term is new. Inserting \eqref{defieomtrafo}, using the defining equation for an L$_\infty$ morphism and conditions \eqref{theocond1} this term is computed to be 
\eq{
\underset{n=1}{\overset{\infty}{\sum}}\frac{1}{(n-1)!}\,
&F_n\big(\,\delta^{T}_{C(\lambda_1,\lambda_2,\mathcal{F})} A, A^{n-1}\big)\\[-0.1cm]
&=\underset{n=1}{\overset{\infty}{\sum}}\, \frac{1}{n!}\, \tilde{b}_n(\hat{\lambda}_1,\hat{\lambda}_2, \hat{\mathcal{F}}, \hat{A}^n)=\hat{\delta}^T_{\hat{C}(\hat{\lambda}_1,\hat{\lambda}_2, \hat{\mathcal{F}})}\, \hat{A}
}
Thus we get the transformation rule
\eq{
\label{SWclosuredef}
\hat{A}(A&+\delta_{C(\lambda_1,\lambda_2,A)}\, A+\delta^{T}_{C(\lambda_1,\lambda_2,\mathcal{F})} A)\\
&=\hat{A}(A)+\hat{\delta}_{\hat{C}(\hat{\lambda}_1,\hat{\lambda}_2,\hat{A})+\hat{\lambda}(\lambda_2,\delta_{\lambda_1}
A)-\hat{\lambda}(\lambda_1,\delta_{\lambda_2} A) } \, \hat{A}(A)
+\hat{\delta}^T_{\hat{C}(\hat{\lambda}_1,\hat{\lambda}_2,\mathcal{F})} \hat{A}(A)\; .
}
This proves one  direction of the Theorem 1A). For the other direction
note that we can start with a Seiberg-Witten map $\hat{A}(A)$  and
$\hat{\lambda}(\lambda,A)$ of the
form  \eqref{DEFISW} satisfying the relations \eqref{SWMAP1} and
\eqref{SWclosuredef}.
 Going all the steps in the computations backwards then reveals that
 the $F_n$ with the same gauge field inserted satisfy the defining equation of an L$_\infty$
 quasi-isomorphism on the stated inputs with the trivial assertions  \eqref{theocond1}.
Using the graded symmetry and polarization identities (see \cite{Hohm:2017pnh}) this is enough to get the equations on general inputs
of gauge fields. This finally proves Theorem 1A).

For the proof of Theorem 1B) we have to perform very similar computations. 
Recall that the equation of motion is expanded as
\eq{\label{EOM}
\mathcal{F}=\underset{n=1}{\overset{\infty}{\sum}}\,\frac{1}{n!}\, b_n\left(A^n\right)\; .
}
Given the maps $\left\lbrace F_n \right\rbrace:V\rightarrow W$ of $A)$, which we assume to satisfy additionally the L$_\infty$ QISO relations on inputs $(A^n)$, $(\lambda,E,A^n)$, we define for the equation of motion in the hatted theory
\eq{\label{hattedeom}
\hat{\mathcal{F}}=\underset{n=0}{\overset{\infty}{\sum}}\, \frac{1}{n!}\, F_{n+1}\left(\mathcal{F},A^n\right)\; .
}
In order for this to be consistent we have to show
\begin{lem}
\eq{\label{TOSHOWEOM}
\underset{n=0}{\overset{\infty}{\sum}}\, \frac{1}{n!}\, F_{n+1}\left(\mathcal{F},A^n\right)=\underset{n=1}{\overset{\infty}{\sum}}\,\frac{1}{n!}\, \tilde{b}_n\left(\hat{A}^n\right)
}
with $\hat{A}(A)$ given by \eqref{DEFISW}. 
\end{lem}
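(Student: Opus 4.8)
The plan is to mirror the computation carried out in the proof of Lemma 1, now taking all inputs to be copies of the gauge field $A$ instead of one gauge parameter $\lambda$ together with copies of $A$. First I would expand the left hand side of \eqref{TOSHOWEOM} by substituting the expansion \eqref{EOM} of $\mathcal{F}$ into the first slot of each $F_{n+1}$,
\eq{
\sum_{n=0}^\infty \frac{1}{n!}\, F_{n+1}(\mathcal{F}, A^n)=\sum_{n\ge 0}\sum_{k\ge 1}\frac{1}{n!\,k!}\, F_{n+1}\big(b_k(A^k),A^n\big)\,.
}
The key observation is that this double sum equals $\sum_{m}\tfrac{1}{m!}$ times the left hand side of the defining quasi-isomorphism relation \eqref{defiquasi} evaluated on the $m=n+k$ identical inputs $x_1=\dots=x_m=A$: since every entry has degree zero no Koszul signs arise, and the unshuffle sum collapses to the multinomial count $\binom{m}{k}=\tfrac{m!}{k!(m-k)!}$ of ways to select the $k$ arguments placed inside $b_k$.

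Because the inputs $(A^n)$ are exactly those on which the maps $\{F_n\}$ are assumed to satisfy the quasi-isomorphism relations in part B), I would then trade the left hand side of \eqref{defiquasi} for its right hand side, $\sum_{k_1+\dots+k_j=m}\tfrac{1}{j!}\binom{m}{k_1,\dots,k_j}\,\tilde b_j\big(F_{k_1}(A^{k_1}),\dots,F_{k_j}(A^{k_j})\big)$. Summing this against $\tfrac{1}{m!}$ over all $m$ and using $\tfrac{1}{m!}\binom{m}{k_1,\dots,k_j}=\tfrac{1}{k_1!\cdots k_j!}$ releases the constraint $k_1+\dots+k_j=m$ into $j$ independent sums over $k_i\ge 1$. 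The result is precisely the expansion of $\sum_{j\ge 1}\tfrac{1}{j!}\,\tilde b_j(\hat A^j)$ produced by inserting $\hat A=\sum_{k\ge 1}\tfrac{1}{k!}F_k(A^k)$ from \eqref{DEFISW} into each argument of $\tilde b_j$, which is the right hand side of \eqref{TOSHOWEOM}. This chain is formally identical to the passage from \eqref{ZWISCHEN2} to the expression for $\hat\delta_{\hat\lambda}\hat A(A)$ that closes the proof of Lemma 1.

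The step demanding the most care is the combinatorial bookkeeping of the multinomial coefficients, i.e.\ matching the number of unshuffles on each side of \eqref{defiquasi} against the symmetry factors $1/n!$, $1/k!$ and $1/j!$ generated by the expansions of $\mathcal{F}$, $\hat A$ and $\hat{\mathcal{F}}$, and verifying that the reindexings $m\mapsto n+k$ and $m\mapsto k_1+\dots+k_j$ yield exactly the independent sums over $k_i\ge 1$. Relative to Lemma 1 there is, however, a genuine simplification: with no gauge parameter to distribute among the slots, no term of the form $F_{n+1}(\mathcal{F},\lambda,A^{n-1})$ is ever produced, so \eqref{theocond1} need not be invoked and \eqref{TOSHOWEOM} holds identically, with no on-shell remainder.
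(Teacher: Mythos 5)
Your proposal is correct and follows essentially the same route as the paper: the paper's own proof is a one-line sketch (insert \eqref{EOM} into the left hand side and apply the defining relation \eqref{defiquasi} of an L$_\infty$ morphism ``by a similar computation as in the proof of Lemma 1''), and your argument is exactly that computation carried out explicitly, with the multinomial bookkeeping done correctly. Your closing observation that no terms of the form $F_{n+1}(\mathcal{F},\lambda,A^{n-1})$ arise here, so that the conditions \eqref{theocond1} are not needed, is a valid refinement consistent with the paper, which invokes those conditions only in the proof of Lemma 1.
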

\begin{proof}
Inserting \eqref{EOM} in the left hand side  of \eqref{TOSHOWEOM} and using the defining relation for an L$_\infty$ morphism this is can be verified by a similar computation as in the proof of Lemma 1.
\end{proof}

\begin{lem}
\eq{\label{EOMtrafo}
\hat{\mathcal{F}}(\mathcal{F}+\delta_\lambda\mathcal{F}, A+\delta_\lambda A)&=\hat{\mathcal{F}}(\mathcal{F},A)+\hat{\delta}_{\hat{\lambda}(\lambda,A)} \hat{\mathcal{F}}(\mathcal{F},A)\; .
}
\end{lem}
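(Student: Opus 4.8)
The plan is to mirror the proof of Lemma 1 almost verbatim, the only new feature being the extra degree-$(-1)$ slot occupied by $\mathcal{F}$. First I would expand the left-hand side of \eqref{EOMtrafo} to linear order in the infinitesimal parameter $\lambda$. Writing $\hat{\mathcal{F}}(\mathcal{F},A)$ through its definition \eqref{hattedeom} and using multilinearity of the maps $F_n$, the variation splits into a piece acting on the equation of motion and a piece acting on the gauge field,
\[
\hat{\mathcal{F}}(\mathcal{F}+\delta_\lambda\mathcal{F},A+\delta_\lambda A)=\hat{\mathcal{F}}(\mathcal{F},A)+\sum_{n\ge 0}\tfrac{1}{n!}\,F_{n+1}(\delta_\lambda\mathcal{F},A^n)+\sum_{n\ge 1}\tfrac{1}{(n-1)!}\,F_{n+1}(\mathcal{F},\delta_\lambda A,A^{n-1})+\mathcal{O}(\lambda^2),
\]
so that it remains to identify the last two sums with $\hat{\delta}_{\hat{\lambda}(\lambda,A)}\hat{\mathcal{F}}(\mathcal{F},A)=\sum_{j\ge 0}\tfrac{1}{j!}\,\tilde{b}_{j+2}(\hat{\lambda},\hat{\mathcal{F}},\hat{A}^j)$.

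Next I would insert the explicit brackets $\delta_\lambda\mathcal{F}=\sum_k\tfrac{1}{k!}b_{k+2}(\lambda,\mathcal{F},A^k)$ and $\delta_\lambda A=\sum_k\tfrac{1}{k!}b_{k+1}(\lambda,A^k)$ from Section 3 and recognise the resulting terms as those pieces of the left-hand side of the morphism relation \eqref{defiquasi}, evaluated on the inputs $(\lambda,\mathcal{F},A^n)$, in which the inner bracket $b_k$ acts on a subset containing $\lambda$. The complete left-hand side of \eqref{defiquasi} on these inputs carries two further families of terms, which I must check away: those where $b_k$ hits only gauge fields, and those where it hits $\mathcal{F}$ together with gauge fields. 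The first family assembles $\sum_k\tfrac{1}{k!}b_k(A^k)=\mathcal{F}$ by \eqref{EOM} and feeds it, alongside the original $\mathcal{F}$ and the parameter $\lambda$, into a single $F$ carrying one $\lambda$ and two degree-$(-1)$ entries, which vanishes by the constraint \eqref{theocond1} (the case $k=1$, $l=2$); the second family produces an inner bracket of degree $-2$ valued in the trivial subspace $V_{-2}=0$ and hence vanishes identically. The two sums of interest therefore reproduce the entire left-hand side of \eqref{defiquasi}, and I may replace them by its right-hand side.

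Finally I would unwrap that right-hand side, $\sum\tfrac{\epsilon}{j!}\tilde{b}_j(F_{k_1}\otimes\dots\otimes F_{k_j})$, exactly along the lines of the passage \eqref{ZWISCHEN}--\eqref{ZWISCHEN2}. The constraints \eqref{theocond1} force $\lambda$ into one factor $F_{k+1}(\lambda,A^k)$ and $\mathcal{F}$ into one factor $F_{k+1}(\mathcal{F},A^k)$, every remaining factor being of the form $F_k(A^k)$; any partition placing two distinguished elements into the same $F$, or two parameters respectively two equations of motion together, is annihilated by \eqref{theocond1}. Resumming the gauge-field multiplicities then collapses the three types of factor into $\hat{\lambda}(\lambda,A)$, $\hat{\mathcal{F}}(\mathcal{F},A)$ and copies of $\hat{A}(A)$, leaving $\sum_{j\ge 0}\tfrac{1}{j!}\tilde{b}_{j+2}(\hat{\lambda},\hat{\mathcal{F}},\hat{A}^j)=\hat{\delta}_{\hat{\lambda}}\hat{\mathcal{F}}$, as required. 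I expect the main obstacle to be purely combinatorial: keeping the multinomial coefficients consistent when the single $\lambda$-slot and the single $\mathcal{F}$-slot are peeled off from the symmetric block of gauge fields, precisely the bookkeeping of \eqref{QUSIAF}--\eqref{ZWISCHEN2}. The degree argument $V_{-2}=0$ together with the vanishing conditions \eqref{theocond1} are exactly what guarantee that no stray term survives this reorganisation.
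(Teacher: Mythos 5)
Your proposal is correct and follows essentially the same route as the paper's proof: expand the left-hand side to linear order in $\lambda$, recognise the surviving terms as the left-hand side of the morphism relation \eqref{defiquasi} on inputs $(\lambda,\mathcal{F},A^n)$, and resum its right-hand side into $\hat{\delta}_{\hat{\lambda}(\lambda,A)}\hat{\mathcal{F}}(\mathcal{F},A)$ using \eqref{theocond1}. If anything, your accounting is more explicit than the paper's, which buries the vanishing of the stray families (your $F(\lambda,\mathcal{F},\mathcal{F}',A^n)=0$ case of \eqref{theocond1} and the degree argument that $b_{k+1}(\mathcal{F},A^k)$ lands in the trivial space $V_{-2}$) in the phrase ``upon using \eqref{theocond1} and some combinatorics.''
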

\begin{proof}
We start by computing the left hand side of \eqref{EOMtrafo}.
\eq{
&\hat{\mathcal{F}}(\mathcal{F}+\delta_\lambda \mathcal{F}, A+\delta_\lambda A)\\
=&\underset{n=0}{\sum^\infty} \, \frac{1}{n!}\bigg[ F_{n+1}\left(\mathcal{F},A^n\right)+F_{n+2}\left(\mathcal{F},\delta_\lambda A,A^n\right)+F_{n+1}\left(\delta_\lambda \mathcal{F}, A^n\right)\bigg]+\mathcal{O}(\lambda^2)\\
=&\hat{\mathcal{F}}(\mathcal{F},A)+\underset{m=0}{\sum^\infty}\, \frac{1}{m!}\underset{n+k=m}{\sum}\,\frac{m!}{n!k!}\bigg[F_{n+2}\left(b_{k+1}(\lambda,A^k),\mathcal{F},A^n\right)\\
&\phantom{====================} + F_{n+1}\left(b_{k+2}(\lambda,\mathcal{F},A^k),A^n\right)\bigg]\,.
}
In the second equality we used \eqref{hattedeom} and inserted the
definition of the gauge transformations for the field and the
equations of motion. Next we use \eqref{defiquasi} in the second
summand which  upon using \eqref{theocond1} and some combinatorics  is equivalent to
\begin{eqnarray}
&&\underset{m=0}{\sum^\infty} \, \frac{1}{m!}\, \underset{j=0}{\sum^\infty}\, \underset{k_1+\dots +k_{j+2}=m}{\sum}\, \frac{1}{j!k_1!\dots k_{j+2}!}\, \tilde{b}_{j+2}\bigg(F_{k_1+1}(\lambda,A^{k_1}),F_{k_2+1}(\mathcal{F},A^{k_2}),\nonumber\\
&&\phantom{=======================} F_{k_3}(A^{k_3}),\dots, F_{k_{j+2}}(A^{k_{j+2}})\bigg)\nonumber\\
&&= \underset{j=0}{\sum^\infty}\, \frac{1}{j!}\, \tilde{b}_{j+2}\left(\hat{\lambda}(\lambda,A),\hat{\mathcal{F}}(\mathcal{F},A),\hat{A}(A)^j\right)\\[0.1cm]
&&=\, \hat{\delta}_{\hat{\lambda}(\lambda,A)}\, \hat{\mathcal{F}}(\mathcal{F},A)\; .\nonumber
\end{eqnarray}
This finally shows \eqref{EOMtrafo}. 
\end{proof}

For the other direction of Theorem1B) we note that we can make the
ansatz \eqref{hattedeom} and just go all the steps of the above
computations backwards. This gives the defining relations for an
L$_\infty$ quasi-isomorphism on $(A^n)$ and $(\lambda, E, A^n)$.
\end{proof}

Note that it seems that we are only using the conditions for an L$_\infty$ morphism. But the Seiberg-Witten map should of course be invertible, which implies quasi-isomorphism on the L$_\infty$ side.

\section{Conclusion}

Motivated by redundancies in defining the L$_\infty$ structure in the
bootstrap approach to gauge theories, we showed how mathematical
equivalence of the solutions implies also 
physical equivalence. This is done by showing that a
quasi-isomorphism between L$_\infty$ algebras of two
gauge field theories is equivalent to the existence of a Seiberg-Witten map
between the two. This ensures that there are the same degrees of
freedom in both field theories.  Note that when considering only the gauge L$_\infty$ algebras of the theories, i.e. setting $X_{-1}=0$, the conditions \eqref{theocond1} are trivially satisfied and we get a complete quasi-isomorphism between the gauge L$_\infty$ algebras from a Seiberg-Witten map.

In addition we derived a condition for
the closure of the gauge algebra in terms of a Seiberg-Witten
map. This was motivated by the original example of Seiberg-Witten
discussed in \cite{Seiberg:1999vs}, but we think that the equivalence
in terms of L$_\infty$ algebras establishes \eqref{SWclosuredef} as the
correct formula. Furthermore we argued that the existence of a
quasi-isomorphism of the full theory implies that the equations of
motion in both theories get mapped onto each  other.

\vspace{0.8cm}

\noindent
\subsubsection*{Acknowledgments}
We are indebted to  Andreas Deser for early discussions on this subject. In addition we want to thank Jim Stasheff for correspondence on the first version of the paper. 
 V.K. acknowledges the CAPES-Humboldt Fellowship No.~0079/16-2 and CNPq Grant No.~305372/2016-5.

\newpage

\bibliographystyle{utphys}
\bibliography{references}
 
\end{document}